\newtheorem{thm}{Theorem}
\newtheorem{prpn}[thm]{Proposition}
\def\wh{\widehat}
\def\t{\tilde}
\def\mb{\mathbf}
\def\bP{\mathbf{P}}
\def\cP{\mathcal{P}}
\begin{document}

\title{Robust Power Allocation and Outage Analysis for Secrecy in Independent Parallel Gaussian Channels}
\author{\IEEEauthorblockN{Siddhartha Sarma, Kundan Kandhway and Joy Kuri}
\thanks{The authors are with the Department of Electronic Systems Engineering, Indian Institute of Science, Bangalore, Karnataka - 560012, India (e-mail: \{siddharth, kundan, kuri\}@dese.iisc.ernet.in).} \vspace*{-3ex}}

\maketitle

\thispagestyle{fancy}

\begin{abstract} 
This letter studies parallel independent Gaussian channels with uncertain eavesdropper channel state information (CSI). Firstly, we evaluate the probability of zero secrecy rate in this system for (i) given instantaneous channel conditions and (ii) a Rayleigh fading scenario. Secondly, when non-zero secrecy is achievable in the low SNR regime, we aim to solve a robust power allocation problem which minimizes the outage probability at a target secrecy rate. We bound the outage probability and obtain a linear fractional program that takes into account the uncertainty in eavesdropper CSI while allocating power on the parallel channels. Problem structure is exploited to solve this optimization problem efficiently. We find the proposed scheme effective for uncertain eavesdropper CSI in comparison with conventional power allocation schemes.
\end{abstract}
\vspace*{-2ex}
\section{Introduction}
By using inherent random noise in communication channels, physical layer security achieves \emph{information theoretically secure} communications. Researchers have studied and characterized \textit{secrecy capacity} for different communication systems and channel scenarios ranging from single antenna single hop \cite{barrosISIT2006} to multi-antenna multi-hop systems \cite{ngTWC2011}. Later, to improve secrecy capacity, researchers have proposed schemes like MIMO with artificial noise generation \cite{goelTWC2008}, jammer assisted transmission \cite{huangTSP2012}, cooperative relaying \cite{dong1}, analog network coding \cite{chenTIFS2012} and combined relaying-jamming \cite{parkJSAC2013} in the context physical layer security.

However, most of the existing literature on physically secure communications considers perfect knowledge of eavesdroppers' channel state information (CSI)---a far fetched assumption. For real world scenarios, e.g., border surveillance, we can only expect partial eavesdropper CSI (e.g. estimated path loss). Recently a few papers have discussed power allocation to improve secrecy for single channel scenarios when no CSI or partial CSI for the eavesdropper's channel is available, either with the help of a jammer or using beamforming or both \cite{ngTWC2011,huangTSP2012,wang2013joint}. But studies involving (robust) optimal power allocation for \emph{parallel Gaussian channels with imperfect eavesdropper CSI} have received little attention. 

The parallel channels serve as a model for wideband wireless communications, channels with inter-symbol interference, block fading channels and multi-antenna systems. The secrecy capacity of parallel channels was studied in \cite{zang2010} and optimal power allocation for the Gaussian scenario was evaluated in \cite{jor}. But none of them addressed the imperfect eavesdropper CSI scenario. In the current article, we propose a robust power allocation scheme which ensures minimum secrecy outage when partial eavesdropper CSI is available. 

Robust power allocation has appeared in \cite{mallickTOC2013} for relay channels without secrecy, in \cite{huangTSP2012} for MISO systems, and in \cite{liICASSP2014} for amplify and forward relaying in the context of secrecy. However, these works did not consider parallel Gaussian channels.
Our contributions are summarized below.
\begin{itemize}
\item Approximate \textit{instantaneous complete secrecy outage} probability for partial eavesdropper CSI. \textit{Closed form} expression for average complete secrecy outage for fading channels.
\item When non-zero secrecy is possible, optimal power allocation to minimize $\Pr(R_s<R_s^{(0)})$, where $R_s^{(0)}$ is the target secrecy rate. The proposed technique for this \emph{robust power allocation problem} bounds the outage probability and leads to a linear fractional program.
\item \emph{Computationally efficient technique} to solve the formulated linear fractional program by exploiting the problem structure. Comparison of this power allocation technique with several conventional schemes with respect to secrecy outage.  
\end{itemize}

\vspace*{-3ex}
\section{System Model}
\label{sec:Model}
We consider a single transmitter (source)--receiver (destination) pair in presence of an eavesdropper. The source can transmit information to the destination using $N$ parallel channels indexed by $i \in \mathcal{N}=\{1,2,\cdots,N\}$. The eavesdropper is passively listening to the source--destination transmission. The $i$th channel gains for the source to destination channel and the source to eavesdropper channel are denoted by complex numbers $h_i$ and $g_i$, respectively. The incomplete CSI for the eavesdropper's channel is modeled as: $g_i=\wh{g}_i+\t{g}_i$ 
where, $\wh{g}_i$ and $\t{g}_i$ are the estimated channel gain and the unknown error term, respectively. For $i,j \in \mathcal{N}$, $\t{g}_i$ and $\t{g}_j$ are independent. The error $\t{g}_i$ is a circularly symmetric Gaussian random variable, i.e., $\t{g}_i \sim \mathcal{CN}(0,\epsilon_i^2),\: \forall i$. Upon transmitting the vector source signal $\mb{x}=[{x_1,x_2,\cdots,x_N}]^T$, the destination and the eavesdropper receive the following signals:
\begin{align*}
y_{d,i}=h_ix_i+z_{d,i}\; \text{ and } y_{e,i}=g_ix_i+z_{e,i},\; \forall i\in \mathcal{N}.
\end{align*}  
The noise variables $z_{d,i}$ and $z_{e,i}$ are i.i.d. across the $N$ parallel channels, the channel uses over time, and independent of the source signal. All noise variables are circularly symmetric Gaussian random variables with mean $0$ and variance $1$. Also, for practical reasons, we have a common power constraint over the parallel channels, i.e.,
$\sum_{i=1}^{N}\mathbb{E}[x_{i}^2] \le P$.
This assumption is quite practical when the transmitter has limited power supply; also, excessive power use can interfere with other transmitting nodes in radio range. 

For parallel independent Gaussian channels, secrecy capacity is attained when each source signal is distributed according to the Gaussian distribution, i.e., $x_i \sim \mathcal{CN}(0,P_i)$. Therefore we can write \cite{zang2010}:

\small
\begin{align}\label{eq:secap}
\vspace*{-2ex}
\hspace*{-7pt}C_s\hspace*{-2pt}=\hspace*{-2pt}\underset{\bP \in \cP }{\max}\sum_{i=1}^N\left[\frac{1}{2}\log{(1+\hspace*{-2pt}|h_i|^2P_i)}\hspace*{-2pt}-\hspace*{-2pt}\frac{1}{2}\log{(1+\hspace*{-2pt}|g_i|^2P_i)}\right]^+
\vspace*{-3ex}
\end{align}

where, $\cP:=\{\bP: \sum\limits_{i=1}^NP_i \le P,\;P_i\ge 0, \forall i\}$,  $\bP=[P_1,P_2,\cdots,P_N]^T$ and $[x]^+=\max\{0,x\}$.

\vspace*{-2ex}
\normalsize
\section{Complete Secrecy Outage Analysis}
\label{sec:ana}
In Sec. \ref{sec:cso_inst}, we evaluate the instantaneous complete outage probability $\Pr(C_s=0|h_i,\wh{g}_i,\forall i)$\footnote{The precise expression is: $\Pr(C_s=0|H_i=h_i,\wh{G}_i=\wh{g}_i,\forall i)$. Here, $H_i$ and $G_i=\wh{G}_i+\t{g}_i$ are the random variables corresponding to the source-destination and the source-eavesdropper channels respectively. The source-destination channel gain $H_i=h_i$ is known perfectly and estimated source-eavesdropper channel gain is $\wh{G}_i=\wh{g}_i$ at each time epoch. Note that, $\t{g}_i$ is the uncertainty term and therefore, a random variable. For brevity, we have used the compression $\Pr(C_s=0|h_i,\wh{g}_i)$ instead.}---a consequence of imperfect information about eavesdropper's CSI. Sec. \ref{sec:cso_fad} computes the same for fading channels. 
\vspace*{-2ex}
\subsection{Complete Secrecy Outage for instantaneous channel gains, $\Pr(C_s=0|h_i,\wh{g}_i,\forall i)$}\label{sec:cso_inst}
Complete secrecy outage occurs when the receiver's absolute channel gain is less than the corresponding eavesdropper's absolute channel gain, for all channels, i.e., $|h_i| \le |g_i|,\;\forall i \in \mathcal{N}$. This scenario leads to \emph{zero} secrecy rate irrespective of the power allocated.\\
$\hspace*{5ex} \Pr(C_s=0|h_i,\wh{g}_i,\forall i)
 =\Pr(|h_i|<|g_i|,\forall i)\\\hspace*{25ex}=\prod\limits_{i=1}^{N}\Pr(|h_i|<|\wh{g}_i+\t{g}_i|)$.

The last equality is true because channels are independent.
We define random variables $X_{1i}:=\Re(\wh{g}_i+\t{g}_i) \sim \mathcal{N}(\Re(\wh{g}_i),\frac{1}{2}\epsilon_i^2)$ and $X_{2i}:=\Im(\wh{g}_i+\t{g}_i) \sim \mathcal{N}(\Im(\wh{g}_i),\frac{1}{2}\epsilon_i^2)$, where $\Re(.)$ and $\Im(.)$ are real and imaginary parts of a complex number, respectively. For each channel, the probability can be calculated in the following manner:\\
$ \Pr(|h_i|<|\wh{g}_i+\t{g}_i|) =\Pr(|h_i|^2<|\wh{g}_i+\t{g}_i|^2)\\ 
= \Pr\left( |h_i|^2 < \left( \frac{2X_{1i}^2}{\epsilon_i^2}+\frac{2X_{2i}^2}{\epsilon_i^2}\right) \frac{\epsilon_i^2}{2}\right) 
 =\Pr \left( |h_i|^2 < \chi_i^2\frac{\epsilon_i^2}{2}\right).$\\
Here, $\chi_{i}^2$ is a non-central chi-square random variable with degrees of freedom (d.o.f) $2$ and non-centrality parameter $\lambda_i^2=2\left( \frac{\Re(\wh{g}_i)}{\epsilon_i}\right) ^2+  2\left( \frac{\Im(\wh{g}_i)}{\epsilon_i}\right) ^2= \frac{2|\wh{g}_i|^2}{\epsilon_i^2}$. A non-central chi-square random variable can be approximated by a central chi-square random variable as follows \cite{coxCJS1987}:
\begin{align*}
\Pr\left( \chi_i^2< \eta_i \right) \approx \Pr\left( \chi_{i,0}^2< \frac{\eta_i}{(1+\lambda_i^2/2)}\right)=1-e^{-\frac{\eta_i}{2(1+\lambda_i^2/2)}}. 
\end{align*}
For small values of centrality parameter ($\lambda_i^2<0.4$), this approximation is quite accurate and for higher values it is conservative. 
The last equality is because a central chi-square random variable with d.o.f. 2, is distributed exponentially. Therefore, the final outage probability is
\begin{align*}
\prod\limits_{i=1}^{N}\Pr\left( \chi_i^2 >  \frac{2|h_i|^2}{\epsilon_i^2}\right)  \approx \prod\limits_{i=1}^{N}e^{-\frac{|h_i|^2}{\epsilon_i^2(1+\lambda_i^2/2)}}  =e^{-\sum\limits_{i=1}^{N}\frac{|h_i|^2}{|\wh{g}_i|^2+\epsilon_i^2}}.
\end{align*} 
 In Fig. \ref{fig:cmpr_outage}, we plot the outage probability calculated from simulation and analytical approximation with respect to $\epsilon_i^2$ for several values of $N$ and $|\wh{g}_i|$. Except for the initial part, where $\lambda_i^2 \nless 0.4$, the approximation is close to the simulation.
 
\begin{figure}[htbp]
	\centering
	\includegraphics[scale=0.45]{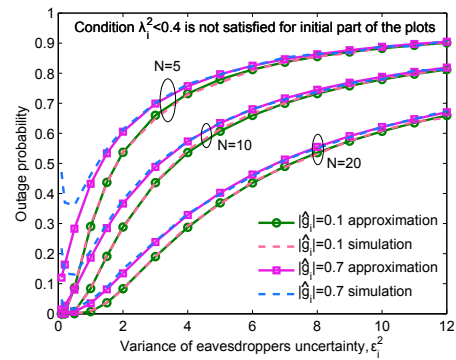}
	\caption{\small Plot of outage probability for numerical simulation and approximation with respect of $\epsilon_i^2$. Here, $|h_i|=0.5,\;\forall i$. 
	} 
	\label{fig:cmpr_outage}
\vspace*{-2ex}
\end{figure}
\vspace*{-4ex}
\subsection{Complete secrecy outage for fading channels $\Pr(C_s=0)$}\label{sec:cso_fad}
When $h_i$ and $\wh{g}_i$ are sampled from  circularly symmetric Gaussian distributions $\mathcal{CN}(0,\sigma_{m,i}^2)$ and $\mathcal{CN}(0,\sigma_{e,i}^2)$, respectively, then $|h_i|$ and $|g_i|$ have Rayleigh distributions with parameters $\sigma_{m,i}/\sqrt{2}$ and $(\sigma_{e,i}+\epsilon_i)/\sqrt{2}$. As $|h_i|^2$ and $|g_i|^2$ are exponentially distributed with parameters $1/\sigma_{m,i}^2$ and $1/(\sigma_{e,i}+\epsilon_i)^2$, respectively, following \cite{barrosISIT2006}:\\
$\displaystyle \Pr(C_s=0)=\prod\limits_{i=1}^{N}\left( \frac{1}{1+\rho_i}\right), \text{ where } \rho_i=\frac{\sigma_{m,i}^2}{(\sigma_{e,i}+\epsilon_i)^2}.
$
\vspace*{-2ex}
\section{Analysis for non-zero secrecy rate }
\label{sec:nonzerosec}
In Sec. \ref{sec:ana}, non-zero secrecy is not possible (irrespective of the power allocation). In contrast, in Sec. \ref{sec:rob_pow}, we formulate and provide the optimal solution for a robust power allocation problem to minimize the outage probability for a target secrecy rate, $R_s^{(0)}$, i.e., $\Pr(R_s<R_s^{(0)}|h_i,\wh{g}_i,\forall i)$. An expression for main channel outage for a target rate $R_s^{(0)}$ is also provided (Sec. \ref{sec:mainchnloutge}).
Only for this section, we use a low SNR approximation of Eq. \eqref{eq:secap}. This is valid for small values of $P$---typical in low power devices, such as, small sensor nodes deployed for surveillance. {Several commercial transceivers used in sensor nodes (e.g., ADF7020, ATA542X and CC1000 Series) have linear characteristics for significant portion of SNR \cite{sharmaTWC2010}.} 
\vspace*{-3ex}
\subsection{Robust optimal power allocation}
\label{sec:rob_pow}
Using $\ln(1+x)\approx x$, for $x \to 0$, we can approximate the secrecy rate in Eq. \eqref{eq:secap} as:
\vspace*{-1ex}
\begin{align}\label{opt:lowsnrscap}
R_s &= \frac{1}{2\ln (2)}\sum\limits_{i=1}^{N}\left[ |h_i|^2-|\wh{g}_i+\t{g}_i|^2\right] ^+P_i. 
\end{align}

For scenarios when non-zero secrecy is possible, we minimize the outage with respect to a target secrecy rate, $R_s^{(0)}$. As eavesdropper's CSI is imperfect, we can not directly maximize $R_s$; therefore, we need a robust power allocation approach to minimize the outage probability. The optimization problem can be written as:
\begin{align}\label{opt:outage}
\min\;\Pr(R_s<R_s^{(0)}|h_i,\wh{g}_i,\forall i), \text{ subject to: } \bP\in \mathcal{P}. 
\end{align} 
This optimization is needed only when the main channel, i.e., source to destination channel can sustain a rate  $R_s^{(0)}$. Otherwise, the secrecy rate is assured to be less than $R_s^{(0)}$ and the objective $\Pr(R_s <R_s^{(0)})=1$ everywhere in the \textit{feasible set}, $\bP\in \mathcal{P}$. For small SNR, the main channel can sustain a rate $R_s^{(0)}$ when
\begin{align}\label{eq:validity}
\sum\limits_{i=1}^{N}|h_i|^2P_i -2\ln(2)R_s^{(0)} >0.
\end{align} 
Here, the secrecy outage can be calculated from Eq. \eqref{opt:lowsnrscap} as:

\small
\begin{align}
& \Pr(R_s<R_s^{(0)}|h_i,\wh{g}_i,\forall i) \label{eq:outage_instn1} \\ 
& \le \Pr\left( \sum\limits_{i = 1}^{N}|h_i|^2P_i -2\ln(2)R_s^{(0)} < \sum\limits_{ i= 1}^{N}|\wh{g}_i+\t{g}_i|^2P_i\right) \label{eq:outage_instn2} \\
& = \Pr\left( \sum\limits_{i= 1}^{N}|h_i|^2P_i -2\ln(2)R_s^{(0)} < \sum\limits_{i= 1}^{N}\chi_{i}^2\frac{\epsilon_iP_i}{2}\right). \nonumber
\end{align}
\normalsize
As discussed in Sec. \ref{sec:cso_inst}, $\chi_{i}^2$ is a non-central chi-square random variable (d.o.f. 2)\footnote{We emphasize that non-central to central chi-square approximation is \emph{not} used/required in this section.}. Therefore, the mean of $\chi_{i}^2\frac{\epsilon_iP_i}{2}$ is
\begin{align*}
\mathbb{E}[\chi_{i}^2\frac{\epsilon_iP_i}{2}] =\hspace*{-2pt}\frac{\epsilon_iP_i}{2}(2+\lambda_i^2) =\hspace*{-2pt} \frac{\epsilon_iP_i}{2}(2+\frac{2|\wh{g}_i|^2}{\epsilon_i^2}) = \hspace*{-2pt}(\epsilon_i^2+|\wh{g}_i|^2)P_i.
\end{align*}
Using the Markov inequality, we can bound the outage probability as follows:
\vspace{-0.35cm}

\small
\begin{align}\label{eq:markov}
& \Pr\Bigg( \sum\limits_{i=1}^{N}\chi_{i}^2\frac{\epsilon_iP_i}{2} >\underbrace{\sum_{i=1}^{N}|h_i|^2P_i -2\ln(2)R_s^{(0)}}_{\Delta}  \Bigg) \\ 
\le & \frac{\mathbb{E}\left[ \sum\limits_{i=1}^{N}\chi_{i}^2\frac{\epsilon_iP_i}{2}\right]}{\sum\limits_{i=1}^{N}|h_i|^2P_i -2\ln(2)R_s^{(0)}} \nonumber 
= \frac{\sum\limits_{i=1}^{N}(\epsilon_i^2+|\wh{g}_i|^2)P_i}{\sum\limits_{i=1}^{N}|h_i|^2P_i -2\ln(2)R_s^{(0)}}.
\end{align}
\normalsize
The validity of Markov bound requires $\Delta\ge 0$ in Eq. \eqref{eq:markov}; this is ensured by Eq. \eqref{eq:validity}. The Markov bound is known to be loose; however, we believe that the proposed power allocation scheme has value as substantiated by the numerical results in Sec. \ref{sec:Numerical}, which show improvement over several conventional schemes. In addition, this approach leads to an easy-to-compute solution (Proposition \ref{prpn:optpow}) on resource-constrained sensor nodes. 

To minimize the outage, we propose to minimize this upper bound. This leads to the following equivalent linear fractional program (approximation to Problem \eqref{opt:outage}):
\begin{align}\label{opt:linfrc}
\min \;\; \frac{\mb{c}^T\mb{P}}{\mb{d}^T\mb{P}-2\ln(2)R_s^{(0)}},
\text{ subject to:}\;\; \bP\in \mathcal{P}.
\end{align}
Here, $c_i=\epsilon_i^2+|\wh{g}_i|^2$, $d_i=|h_i|^2$, and $\mb{c}$, $\mb{d}$ are vectors of these elements. The denominator $\mb{d}^T\mb{P}-2\ln(2)R_s^{(0)}\ne0$ from the earlier discussion. This linear fractional program can be solved numerically by reformulating it as a linear program using the \textbf{Charnes-Cooper} transformation. However, due to the simplex constraint and only a few variables (the number of parallel channels, $N$), we propose the following easy to compute solution\footnote{Unlike our case, when the number of corner points is large, simplex or interior-point methods are efficient.}.
\begin{prpn}\label{prpn:optpow}
	The optimal solution to Problem \eqref{opt:linfrc} lies in one of the corners of the set $\overline{\mathcal{P}}:=\{\bP: \sum\limits_{i=1}^NP_i = P,\;P_i\ge 0, \forall i\}$, i.e., $P_i=P$ for some $i\in \mathcal{N}$ and $P_j=0,\; \forall j \ne i$.
\end{prpn}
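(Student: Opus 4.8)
The plan is to prove the claim in two stages: first reduce the feasible region from $\cP$ (with the inequality $\sum_i P_i\le P$) to its full-power face $\overline{\cP}$, and then show that on $\overline{\cP}$ an optimum is always attained at a vertex $P\mb{e}_i$. Throughout I would write $k:=2\ln(2)R_s^{(0)}$ and $f(\bP):=\mb{c}^T\bP/(\mb{d}^T\bP-k)$, and recall from Eq.~\eqref{eq:validity} that the problem is only posed on the valid region $\{\bP\in\cP:\ \mb{d}^T\bP-k>0\}$, where $f$ is well defined and nonnegative (since $c_i,d_i\ge0$ and $P_i\ge0$).

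For the first stage I would use a scaling argument. Given any valid $\bP$ with $\sum_iP_i<P$, consider the ray $t\mapsto f(t\bP)$ for $t\ge1$. A direct differentiation gives
\[
\frac{d}{dt}f(t\bP)=\frac{-k\,\mb{c}^T\bP}{(t\,\mb{d}^T\bP-k)^2}\le0,
\]
because $k>0$ and $\mb{c}^T\bP\ge0$; moreover $t\,\mb{d}^T\bP-k$ stays positive as $t$ grows, so validity is preserved. Hence scaling $\bP$ up to $t^{\star}=P/\sum_iP_i$ pushes the point onto $\overline{\cP}$ without increasing the objective. Therefore some optimum lies on $\overline{\cP}$, and it suffices to analyse $f$ there.

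For the second stage, let $\bP^{*}$ be an optimum on $\overline{\cP}$ and write it as a convex combination $\bP^{*}=\sum_i\mu_i(P\mb{e}_i)$ of the vertices, with $\mu_i=P_i^{*}/P\ge0$ and $\sum_i\mu_i=1$. Because the weights sum to one, the constant $-k$ distributes across the combination, so with $a_i:=c_iP=\mb{c}^T(P\mb{e}_i)$ and $b_i:=d_iP-k=\mb{d}^T(P\mb{e}_i)-k$ one gets $f(\bP^{*})=\big(\sum_i\mu_ia_i\big)\big/\big(\sum_i\mu_ib_i\big)$ with $\sum_i\mu_ib_i>0$. Writing $r:=f(\bP^{*})$ (here $r>0$, since $c_i=\epsilon_i^2+|\wh{g}_i|^2>0$ makes the numerator positive), the identity $\sum_i\mu_i(a_i-rb_i)=0$ is the engine of the proof: I would split the indices by the sign of $b_i$ and note that every vertex with $b_i\le0$ contributes a nonnegative term $a_i-rb_i$ (because $a_i\ge0$ and $r>0$). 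The vertices with $b_i>0$ must then contribute a nonpositive total, so at least one such index $i$ satisfies $a_i-rb_i\le0$, i.e. $a_i/b_i\le r$. This corner $P\mb{e}_i$ lies in the valid region ($b_i>0$) and attains $f(P\mb{e}_i)\le f(\bP^{*})$; optimality of $\bP^{*}$ forces equality, so the corner is itself optimal.

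I expect the sign bookkeeping in the second stage to be the only genuine obstacle. The naive mediant inequality $\min_i a_i/b_i\le(\sum_i\mu_ia_i)/(\sum_i\mu_ib_i)$ requires all $b_i>0$, which can fail here: a channel that alone cannot sustain $R_s^{(0)}$ has $b_i=|h_i|^2P-k\le0$. The argument above sidesteps this precisely by using $a_i\ge0$ and $r>0$ to show that such ``bad'' vertices only tip the balance in the favourable direction, which is what guarantees that the optimal corner can be chosen among the \emph{valid} ones. As an alternative route one could apply the Charnes--Cooper transformation to recast \eqref{opt:linfrc} as a linear program and invoke attainment of an LP optimum at a vertex; the direct convex-combination argument is preferable here, however, because it stays inside $\overline{\cP}$ and makes the role of the validity condition \eqref{eq:validity} explicit.
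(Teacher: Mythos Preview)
Your argument is correct. The two-stage structure matches the paper's outline: first pass to the equality face $\overline{\cP}$, then locate the optimum at a vertex. The paper dispatches the first stage in one sentence (``the objective will consume the total budget $P$'') and the second by citing Swarup's classical result that a linear fractional program attains its optimum at a basic feasible solution of the constraint polytope. You instead supply self-contained arguments for both: the monotone-in-$t$ scaling computation for the first, and a mediant-type convex-combination identity $\sum_i\mu_i(a_i-rb_i)=0$ for the second. The main thing your route buys is an explicit treatment of vertices with $b_i=|h_i|^2P-k\le0$: Swarup's theorem, as invoked in the paper, tacitly assumes the denominator stays positive over the whole simplex, whereas you show directly that the optimal corner can always be chosen among the \emph{valid} ones with $b_i>0$. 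One small point worth tightening in your write-up is the implicit use of validity to guarantee that the set $\{i:\mu_i>0,\ b_i>0\}$ is nonempty (otherwise the ``nonpositive total'' could be zero vacuously); this follows from $\sum_i\mu_i b_i>0$, but you may want to say so explicitly.
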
  
\begin{proof} We provide an outline of the proof here. One can verify that, at the optimum, the objective function of Problem \eqref{opt:lowsnrscap} will consume the total budget $P$ (when at least one of the coefficients of $P_i$ is non-zero). Therefore, we can use equality in the sum constraint. From \cite{swarupOR1965}, a linear fractional program attains its optimum at the basic feasible solution of the constraint set\footnote{This is easy to see because the gradient of the objective function in \eqref{opt:linfrc} is non-zero in the constraint set, leading to behavior similar to linear programs.}. The corners of the constraint hyperplane---only $N$ in number---are the basic feasible solutions and the one that minimizes the objective function is the optimum.
\end{proof}

Note that, unlike a general linear (fractional) program where calculating the corner points is computationally costly, in our case, the corners are known and fixed over the parameter set (specified in Proposition \ref{prpn:optpow}). Thus, the optimum is computed by enumerating the objective value at each corner point and selecting the best channel---computationally efficient even for a sensor node.

\subsection{Main channel outage for fading scenarios (for small SNR)}\label{sec:mainchnloutge}
For completeness, we evaluate the probability of the event when the optimization problem \eqref{opt:linfrc} need not be solved, as the main channel capacity itself is less than the target rate. The fading coefficients $|h_i| \sim Rayleigh(\sigma_{m,i}/\sqrt{2}),\forall i$. The outage occurs when the strongest of the $N$ parallel channels cannot sustain the target rate, i.e.,

\small
\begin{align*}
&\Pr\left(\max_i\;\left\lbrace |h_i|^2P\right\rbrace - 2\ln(2)R_s^{(0)} < 0 \right)\\
& = \Pr\left(\bigcap\limits_{i=1}^N \left\lbrace |h_i|^2P < 2\ln(2)R_s^{(0)}\right\rbrace  \right) \\
& \overset{(u)}{=} \prod\limits_{i=1}^{N}\left( |h_i|^2P < 2\ln(2)R_s^{(0)} \right)
\overset{(v)}{=}\prod\limits_{i=1}^{N}\left( 1-e^{-\left( \frac{2\ln(2)R_s^{(0)}}{P\sigma_{m,i}^2}\right)} \right) .
\end{align*}

\normalsize
$(u)$ and $(v)$ are true because $h_i,\;\forall i$ are independent and $|h_i|^2$ follows the exponential distribution.

\begin{figure*}[!t]
\vspace*{-1ex}
	\centering	
	\subfloat[Outage vs. $R_s^{(0)}$ \label{fig:R0_var}]{
		\includegraphics[width=64mm]{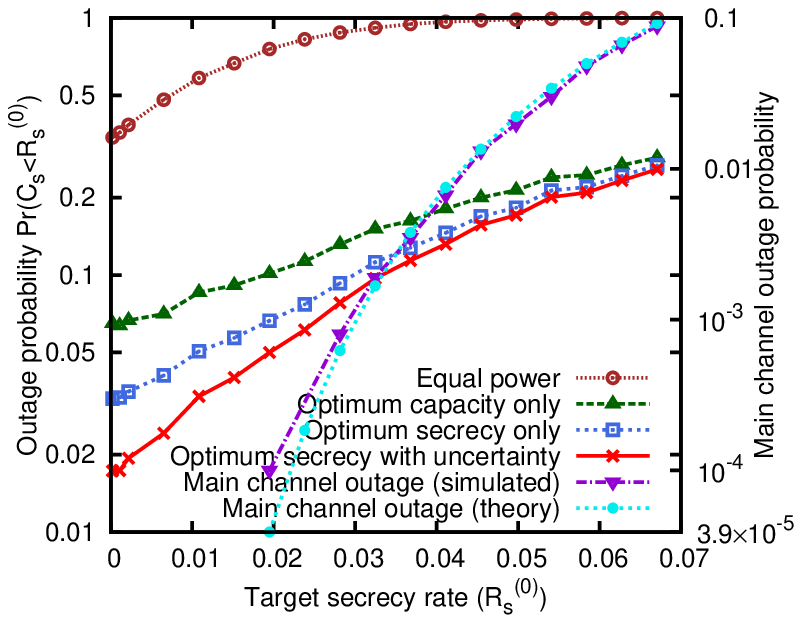} }
	\hspace*{-0.7cm}\hfill\hspace*{-0.7cm}
	\subfloat[Outage vs. eavesdropper's channel uncertainty \label{fig:eps_var}]{
		\includegraphics[width=64mm]{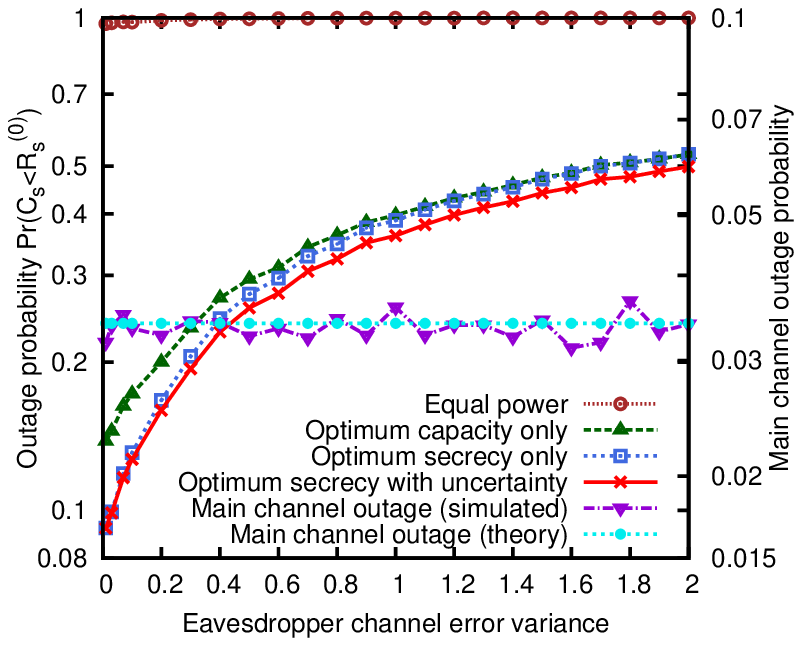} }	
	\hspace*{-0.7cm}\hfill\hspace*{-0.7cm}
	\subfloat[Outage vs. $N$ \label{fig:N_var}]{
		\includegraphics[width=64mm]{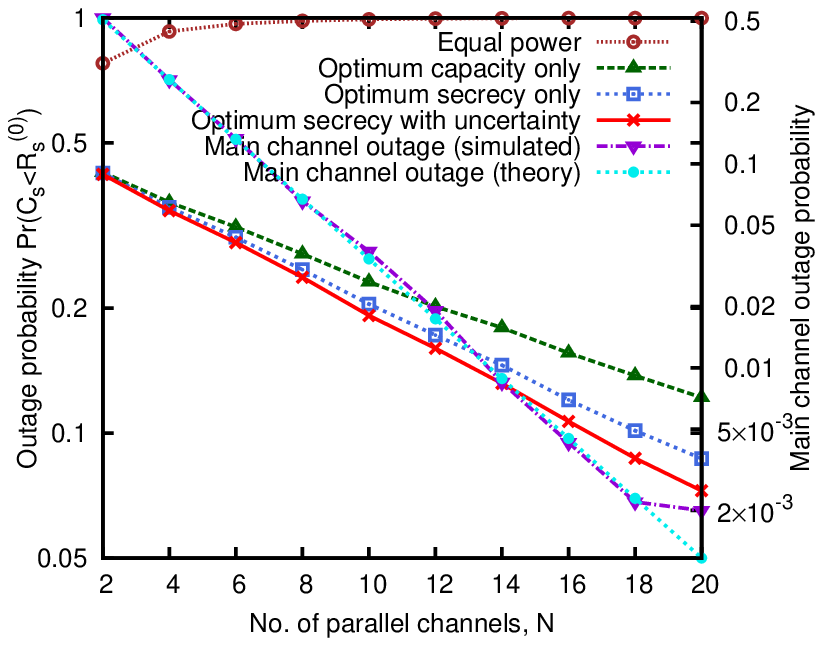} }
	\vspace*{-2ex}
	\caption{\small{Comparison of secrecy outage with respect to different system parameters. Default parameter values (except the one varied): $P=0.1$, $\sigma_{m,i}^2=0.6$, $\sigma_{e,i}^2=0.3$, $R_s^{(0)}=0.625P\times\sigma_{m,i}^2/2\ln(2)$, $\epsilon_i^2=0.3$ for $i \in \{1,2,\cdots,\lceil N/2\rceil\}$ and $\epsilon_i^2=0.09$ for $i \in \{\lceil N/2 \rceil +1,\cdots,N\}$ and $N=10$.  }}
	\label{fig:outage_vs_param}
\vspace*{-4ex}
\end{figure*}

\section{Numerical Results}
\label{sec:Numerical}
In this section, we compare several power allocation schemes for parallel independent Gaussian channels and show the effectiveness of our proposed scheme that considers eavesdropper's channel uncertainty (optimization problem (8)). We consider three conventional power allocation strategies for comparison: (a) \textit{Equal Power}---allocates equal power on every channel, $P_i=P/N,\;\forall i\in \mathcal{N}$. (b) \textit{Optimum Capacity} power allocation---maximizes the Shannon capacity in the main channel, i.e., $\bP_c^* = \arg \max \sum_{i=1}^{N}|h_i|^2P_i$ (assuming small SNR), where $\bP_c^* \in \mathcal{P}$. It allocates the power budget $P$ to the strongest main channel. (c) \textit{Optimum secrecy} power allocation---maximizes secrecy rate for the estimated eavesdropper channel gain without considering uncertainty. $\bP_s^* = \arg \max \sum_{i=1}^{N}[|h_i|^2-|\wh{g}_i|^2]^+P_i$, where $\bP_s^* \in \mathcal{P}$. It allocates $P$ to the channel that has largest value of $|h_i|^2-|\wh{g}_i|^2$. 

For simulations, we have generated the main channel $h_i$, and the estimated eavesdropper channel $\wh{g}_i$, from $\mathcal{CN}(0,\sigma_{m,i}^2)$ and $\mathcal{CN}(0,\sigma_{e,i}^2)$, respectively with $\sigma_{m,i}^2=0.6$ and $\sigma_{e,i}^2=0.3$ for all parallel channels. The power budget considered is $P=0.1$. We use the default parameters for the target secrecy rate $R_s^{(0)}=0.625P\times\sigma_{m,i}^2/2\ln(2)$ and the number of parallel channels $N=10$. Uncertainty in the $i$th eavesdropper's channel is generated from $\mathcal{CN}(0,\epsilon_i^2)$ with $\epsilon_i^2=0.3$ for $i \in \{1,2,\cdots,\lceil N/2\rceil\}$ and $\epsilon_i^2=0.09$ for $i \in \{\lceil N/2 \rceil +1,\cdots,N\}$.

Fig. \ref{fig:outage_vs_param} plots the variation of the objective function, secrecy outage $\Pr(R_s <R_s^{(0)})$, with respect to variation of the system parameters (left y-axis); and the main channel outage for the same target rate $R_s^{(0)}$ (right y-axis). In all the three figures, ``Optimum secrecy with uncertainty" identifies the proposed scheme. The plot  $\Pr(R_s <R_s^{(0)})$ is calculated, for example, for the proposed scheme as follows. 

We generate $h_i,\;\wh{g}_i,\;i \in \mathcal{N}$. Given this channel, we calculate the power allocation of the proposed scheme $\bP_p^*$ using the optimization problem \eqref{opt:linfrc}. Note that, we only need the variance of the uncertainty of the eavesdropper's channel, $\epsilon_i^2,\;\forall i$ and not the exact value of $g_i$ to calculate $\bP_p^*$. Now, the conditional outage probability given $h_i$ and $\wh{g}_i,\;\forall i$, i.e., $\Pr(R_s <R_s^{(0)}|h_i,\wh{g}_i,\forall i)$ is calculated by Monte-Carlo averaging over $10^4$ values of $\t{g}_i,\forall i$ using \eqref{eq:outage_instn2} for $\bP_p^*$. We obtain the final objective function, $\Pr(R_s <R_s^{(0)})$, as follows: $10^4$ instances of $h_i$ and $\wh{g}_i$ are generated, \eqref{eq:outage_instn2} is averaged over the instances for which the main channel is not in outage, i.e., $\max\{|h_i|^2,\forall i\}P> 2\ln(2)R_s^{(0)}$ . The same is carried out for the three conventional power allocation strategies. 

Fig. \ref{fig:R0_var} demonstrates the effectiveness of the proposed scheme when the target secrecy rate $R_s^{(0)}$ is varied, specially for small values of $R_s^{(0)}$. The effect of variation of eavesdropper's channel uncertainty is shown in Fig \ref{fig:eps_var}. As expected, when uncertainty is small, performance of the ``Optimum secrecy" power allocation is comparable to the proposed scheme. However, as the uncertainty increases, the proposed scheme outperforms the others. Finally, we vary the number of channels, $N$ in Fig. \ref{fig:N_var}. The proposed scheme exploits the uncertainty parameter $\epsilon_i^2$, in addition to the channel gains, and therefore, outperforms ``Optimum secrecy" (and others). The improvements are prominent for higher values of $N$.
\section{Conclusion}
\label{sec:Concl}
We study the effect of a single eavesdropper's channel uncertainty in a parallel independent Gaussian channel communications system. We evaluate complete secrecy outage probability for instantaneous and fading channels. Further, we propose a robust power allocation scheme to minimize secrecy outage probability at a target rate in the low SNR regime. Our techniques involve (1) approximating non-central chi-square random variables by corresponding central ones to evaluate instantaneous outage; and (2) bounding outage probabilities for robust power allocation, which leads to a linear fractional program. Exploiting the structure of the problem, we propose an easy-to-compute solution. Numerical results show the superiority of the proposed scheme compared to the conventional schemes that do not consider uncertainty. 

\footnotesize 
\bibliographystyle{IEEEtran}
\bibliography{testbib} 

\end{document}